\newcommand{\mat}[1]{\left[\; \begin{matrix} #1 \end{matrix} \:\right]}
\newcommand{\simode}[1]{\left\{\;
\begin{aligned} #1 \end{aligned} \right.}
\newcommand{\bm}[1]{\boldsymbol{#1}}
\newcommand{\tr}{{\sf T}}
\newtheorem{theorem}{Theorem}
\newtheorem{remark}{Remark}
\newtheorem{lemma}{Lemma}
\def\BibTeX{{\rm B\kern-.05em{\sc i\kern-.025em b}\kern-.08em
    T\kern-.1667em\lower.7ex\hbox{E}\kern-.125emX}}
\begin{document}
\title{Structured Kalman Filter for Time Scale Generation in Atomic Clock Ensembles}
\author{Yuyue Yan, \IEEEmembership{Member, IEEE}, 
Takahiro Kawaguchi, \IEEEmembership{Member, IEEE}, 
Yuichiro Yano, \\ 
Yuko Hanado,   and Takayuki Ishizaki ,  \IEEEmembership{Member, IEEE}
\thanks{This work is supported by the Ministry of Internal Affairs and Communications (MIC) under its "Research and Development for Expansion of Radio Resources (JPJ000254)" program.  
Yuyue Yan and Takayuki Ishizaki are with the Department of Systems and Control Engineering, Tokyo Institute of Technology, Meguro, Tokyo 152-8552 Japan (e-mail:  yan.y.ac@m.titech.ac.jp, ishizaki@sc.e.titech.ac.jp). 
Takahiro Kawaguchi is with the Division of Electronics and Informatics, Gunma University, Kiryu, Gunma 371-8510 Japan.  
Yuichiro Yano and Yuko Hanado are with the National Institute of Information and Communications Technology,
Koganei, Tokyo 184-0015 Japan.}}

\maketitle

\begin{abstract}
In this article, we present a structured Kalman filter associated with the transformation matrix for observable Kalman canonical decomposition from conventional Kalman filter (CKF) in order to generate a more accurate time scale.
The conventional Kalman filter is a special case of the proposed structured Kalman filter which yields the same predicted unobservable or observable states when some conditions are satisfied.  
We consider an optimization problem respective to the transformation matrix where the objective function is associated with not only the expected value of prediction error but also its variance.
We reveal that such an objective function is a convex function and show some conditions under which CKF is nothing but the optimal algorithm if ideal computation is possible without computation error.
A numerical example is presented to show the robustness of the proposed method in terms of the initial error covariance. 
\end{abstract} 

\begin{IEEEkeywords}
Atomic clocks, state-space model,  prediction, confidence interval, Kalman filter, time scale.
\end{IEEEkeywords}

\label{sec:introduction}
\IEEEPARstart{A}{n} atomic clock ensemble is a collection of highly accurate atomic clocks that work together to achieve a precise and stable timekeeping system. Atomic clocks are devices that measure time based on the vibrations of atoms with constant resonance frequencies. 
Using the measurements from several atomic clocks within an ensemble, the national metrology institutes (NMIs) all over the world can reduce these variations and create a more reliable and robust timekeeping system   \cite{galleani2010time,chan2009self,liu2021improving}.
The advancements in atomic clock technology and the development of accurate time scales based on clock ensembles offer numerous benefits and applications that are crucial for the future smart society, such as the satellite navigation \cite{wu2015uncertainty},  
financial networks with high-frequency trading and  
time-sensitive transactions \cite{mulvin2017media},
telecommunications \cite{seidelmann2011time}.

The variations in tick rates of atomic clocks are referred to as time deviations from the ideal clock behavior, which can be modeled as stochastic processes.
It has been revealed experimentally that the time deviations of the atomic clocks can be characterized as random noises in a stochastic linear differential equation \cite{zucca2005clock}.
Based on this, in the task of time generation, how to properly deal with the prediction problem for time deviations is the main issue to guarantee excellent performance of the generated time \cite{galleani2010time}. 
In order to improve the accuracy of atomic time and thus to benefit modern technology and fundamental physics, researchers in the time and frequency community have attempted to use Kalman filters for predicting those noises \cite{galleani2003use,galleani2010time,greenhall2012review,trainotti2022detection,mostafa2020enhancing}. 

The Kalman filter \cite{ma2018linear,xin2021kalman,9960730} is an optimal prediction algorithm minimizing an objective function associated solely with the mean-squared prediction error for detectable systems since in such a case the covariances of prediction errors are supposed to be bounded and converging to a steady-state value \cite{de1986riccati,shmaliy2017unbiased}. 
However, for an undetectable system, the covariance is growing unboundedly and thus both the expected value and covariance should be taken into account evaluating the performance of prediction, which implies that the conventional Kalman filter (CKF) may not be optimal.
In fact, as shown in \cite{yanPossibility}, the system model of the atomic clock ensembles is undetectable due to its intrinsic property that the phase difference is only measurable with fine resolution.
Even though some works have been done in terms of solving numerical instability  \cite{greenhall2012review,godel2017kalman}\footnote{Note that the method used in \cite{godel2017kalman} is the same as that presented by Greenhall in \cite{greenhall2012review}.} due to such undetectability,
it is worth asking whether there is space to improve the performance of the time scale from the conventional Kalman filter or not.

In this paper, different from the existing methods, we propose a structured Kalman filter associated with the transformation matrix for observable Kalman canonical decomposition of the atomic clock model.
The conditions where the proposed structured Kalman filter is reduced to the CKF in the unobservable or observable state space are presented.  
In such a case, the proposed method is understood as an alternative method of CKF algorithms avoiding numerical instability. 
By defining an objective function with respect to the transformation matrix according to the expected value and variance of \emph{atomic time} (i.e., prediction error in ensemble time deviation), an optimization problem is formulated to construct the optimal transformation matrix.
We reveal the fact that such an objective function is convex, and show some conditions under which the CKF is nothing but the optimal algorithm minimizing not only the square of expected value but also the variance of atomic time if ideal computation is possible without computation error.

\noindent \textbf{Notation}~We write $\mathbb{R}$ for the set of real numbers, $\mathbb{R}_+$ for the set of positive real numbers, 
$\mathbb{R}^{n}$ for the set of \emph{n}$\times 1$  real column vectors,  and $\mathbb{R}$$^{n \times m}$ for the set of \emph{n}$\times m$ real matrices.
Moreover, $\otimes$ denotes the Kronecker product, $(\cdot)^\tr$ denotes transpose,   $(\cdot)^\dagger$ denotes the Moore-Penrose inverse,   and
$ \mathsf{diag}(\cdot)$ denotes a diagonal matrix. 
Finally, $ \mathbb{E}[x]$ denotes the expected value of a random variable $x$,  whereas  $\mathds{1}_n $ and $I_{n}$  denote the ones (column) vector and the identity matrix of dimension $n$, respectively.

\section{Preliminaries and Problem}\label{sec:moddes}
Consider the $n$th order linear stochastic discrete-time system for an $m$-clock homogeneous ensemble given by
\begin{equation}\label{eq:Ndmodel}
\Sigma:    \simode{
    \bm{x}[k+1] &=  \bm F\bm{x}[k] + \bm{v}[k]  \\
    \bm{y}[k] &= \bm H \bm{x}[k] + \bm{w}[k] \\
    \bm{z}[k] &=\bm D\bm{x}[k]
    }
\end{equation}   
where 
$\bm{x}[k]:= (\bm{x}_1^\tr[k],  \ldots, \bm{x}_n^\tr[k])^\tr\in\mathbb R^{nm}$,
$\bm{y}[k]\in\mathbb R^{m-1}$, 
are the state and measurement, respectively, with $\bm{x}_i[k]:=(x^1_i[k] ,\ldots, x^m_i[k])^\tr\in\mathbb R^{m}$; 
$\bm F:= A\otimes I_m\in\mathbb R^{nm\times nm}$, 
$\bm H:= C \otimes \overline{V}\in\mathbb R^{(m-1)\times nm}$
are the system state transition matrix and observation matrix defined as
\begin{align} 
A:=&A_\tau:=\mat{
    1 & \tau & \frac{\tau^2}{2} & \cdots & \tfrac{\tau^{n-1}}{(n-1)!} \\
    0 & 1 & \tau & \cdots & \frac{\tau^{n-2}}{(n-2)!} \\
    \vdots & & \ddots & \ddots & \vdots \\
    \vdots & & & 1 & \tau \\
    0 & 0 & \cdots & \cdots & 1
    }  
\\ 
C :=& \mat{1 & 0 & \cdots & 0}\in\mathbb R^{1\times n}
\\ 
\overline{V}:=&\mat{I_{m-1}& -\mathds{1}_{m-1}}\in\mathbb R^{(m-1)\times m}
\end{align}  
for the sampling interval $\tau$;
$\bm{v}[k]\in\mathbb R^{nm}$, $\bm{w}[k]\in\mathbb R^{m-1}$ are the system noise and observation noise, and they are white Gaussian with the covariances \cite{galleani2010time} being
\[
W := \int_{0}^{\tau}A_t\mathsf{diag}(q_1^2,\ldots,q_n^2)A_t^{\sf T}dt \otimes I_m, \quad
R:=r^2I_{m-1}
\]
respectively.   
In such a homogeneous ensemble, each atomic clock works as an independent oscillator with the number of waves counted as the clock reading of the clocks, where there exists a slight difference (referred to as the time deviation) between  the actual clock reading and the ideal clock reading.

Note that in the system model (\ref{eq:Ndmodel}),  the vector $\bm{x}_1[k]$ represents the time deviations of the clocks \cite{yanrelation23} and hence the scalar $\bm{z}[k]\in\mathbb R$ denotes the ensemble time deviation defined as a weighted value from the individual time deviations with 
  \begin{align} 
\bm D:= \tfrac{1}{m}C\bigl( I_n \otimes\mathds{1}_m^{\tr})
\end{align}
where the weights are considered to be all equal. 
Furthermore, $Q$ is the covariance of the individual system noise where $A_t$ in the integrant is understood as $A_\tau$ with $\tau$ replaced by $t$.
In terms of the physical meanings, the measurement $\bm{y}[k]$ is the difference of the time deviations (or, equivalently, the difference of the clock reading) measured independently between the clocks where clock $m$ plays the role as a reference clock
(see the detailed model in \cite{galleani2010time,yanrelation23}).

Even though the detectability condition is not satisfied in the system model \cite{yanPossibility}, CKF algorithms are still used in  many practice scenarios of time scale generation of clock ensembles \cite{greenhall2012review,galleani2010time,trainotti2022detection}. 
The CKF algorithm is
\begin{align}  \label{eq:Hanado1}
  & \bm L_k = -\bm F\bm P_k {{\bm H} ^{\sf T}\bigr(\bm{H}\bm P_k{ \bm H}^{\sf T}+R\bigr)^{-1}}  \\  \label{eq:Pk}
  &\bm P_{k+1}=(\bm F +\bm L_k {\bm H})\bm P_k\bm F^{\sf T}+W  \\
  \label{eq:Hanado2}
  &\hat{\bm x}[k+1] =\bm F \hat{\bm x}[k]  - \bm L_k(\bm{y}[k]-\bm{H}\hat{\bm x}[k])
\end{align}
for $k\geq 0$ with the initial $\bm P_0=pI_{nm}$ for some constant $p\in\mathbb{R}_+$, where $\bm L_k$ and $\bm P_k$ are the Kalman (observer) gain and error covariance. 
Together with the linear-quadratic regulator, the Kalman filter solves the linear–quadratic–Gaussian control problem minimizing the mean squared error 
$
\mathbb E\big[\|\bm x[k]-\hat{\bm x}[k]\|^2\big]
$.
Using the Kalman filter, the ensemble time deviation ${\bm z}[k]$ is predicted as $\hat{\bm z}[k]:=\bm D \hat{\bm x}[k]$ and hence generated clock reading of the $m$-clock ensemble is given as
\begin{align}      
\hat h_0[k]=\tfrac{1}{m}\sum\nolimits_{i=1}^m h^j[k]-\hat{\bm z}[k]\in\mathbb R
\end{align}  
where $h^j[k]$ is the actual clock reading of clock $j$ and hence the ideal clock reading can be written as $h_0[k]=\hat h_0[k]\in\mathbb R$ with $\hat{\bm z}[k]$ replaced by ${\bm z}[k]$.
The accuracy of the generated clock reading is evaluated by \emph{atomic time} (which is referred to as \emph{Temps Atomique} (TA) in Europe \cite{galleani2010time}) as 
\begin{equation}\label{eq:TA} 
{\rm TA}[k] := h_0[k]-\hat h_0[k]  =\bm z[k]-\hat{\bm z}[k]= \bm D(\bm{x}[k] -\hat{\bm{x}}[k])
\end{equation}
which is depending on how we yield the predicted state $\hat{\bm x}[k]$.

\emph{Problem}:
Consider an $m$-clock ensemble to generate an accurate time scale for a given initial guess $\hat{\bm x}[0]$.
A fundamental question is whether there is a space to further improve the performance of the time scale from the CKF algorithms, i.e., further
decrease  the atomic time 
${\rm TA}[k] $ as much as possible.

\section{Proposed Structured Kalman Filter}
This section introduces a structured Kalman filter motivated by observable Kalman canonical decomposition for the atomic clock model.       
Specifically, defining a transformation matrix $\varGamma\in\mathbb R^{n\times n(m-1)}$, we consider
\begin{equation} \label{eq:tranformation}
\bm x=\underbrace{\mat{I_n\otimes \overline{V}^{\dagger}&I_n\otimes\mathds{1}_m}\mat{ I_{n(m-1)}& 0\\ \varGamma&I_{n}}}_{
\mathbb T}
\underbrace{\mat{ \bm \xi_{\rm o}   \\  \bm{\xi}_{\rm \bar o}  }}_{ \bm \xi}
\end{equation}
to make decomposition of the system $\Sigma$ into the observable state $\bm \xi_{\rm o}$ and unobservable state $\bm{\xi}_{\rm \bar o}$ for constructing a Kalman filter and a predictor, respectively. 
With such a structure of $\mathbb T$, the basis selection of observable states is limited without $\varGamma$ since all possible basis selections can be represented by introducing $\varGamma$.
Our main idea of including the matrix $\varGamma$ in the linear transformation \eqref{eq:tranformation} is to keep the variability of the basis selections so that one can discuss an optimization problem introduced later for improving the performance of the generated atomic time ${\rm TA}[k]$.
Then, we have
\begin{equation} 
\tilde\Sigma:\simode{
\bm \xi [k+1] &=\mat{\bm F_{\rm oo}   &  0 \\\bm F_{\rm \bar o o}  &  \bm F_{\rm \bar o  \bar o} } \bm \xi[k]+ \mathbb T^{-1}\bm{v}[k] \\
\bm{y}[k] &=\bigl[\bm H_{\rm o}   \ \ \   0\bigl]   \bm \xi [k] + \bm{w}[k] \\
\bm{z}[k] &=\bm D\mathbb T\bm{\xi}[k]}
\end{equation}
where $\bm H_{\rm o}:=C \otimes I_{m-1}$, 
\begin{equation} \label{system_equations}
\simode{ \bm F_{\rm oo} &:= A\otimes I_{m-1}, \\
\bm F_{\rm\bar o\bar o}&:=A,\\
\bm F_{\rm\bar oo}&:=-\varGamma(A\otimes I_{m-1})+A\varGamma ,
}
\end{equation}
and the covariance of the state noise affecting the observable state $\bm \xi_{\rm o}$ is given by  
$
W_{\rm o}:=(I_n\otimes\overline{V})W(I_n\otimes  \overline{V})^{\sf T}=Q\otimes\overline{V}\overline{V}^{\sf T}.
$

Now we introduce our structured Kalman filter as follows.
\subsubsection{Initialization} 
A guess of the initial state $\hat {\bm x}[0]$ is transformed into $\hat{\bm \xi} [0] = T^{-1} \hat {\bm x}[0]$.

\subsubsection{Update for observable states}
Construct
\begin{align} \label{eq:Kalmangain}
&\hat{\bm L}_k = -\bm F_{\rm oo}\hat{\bm P}_k {\bm{H}_{\rm o} ^{\sf T}\bigr(\bm{H}_{\rm o}\hat{\bm P}_k\bm{ H}_{\rm o}^{\sf T}+R\bigr)^{-1}}  \\   \label{eq:Lk_PKF}
&\hat{\bm P}_{k+1}=(\bm F_{\rm oo} +\hat{\bm L}_k \bm{H}_{\rm o})\hat{\bm P}_k\bm F_{\rm oo} ^{\sf T} +W_{\rm o}  
\\ \label{eq:PKA2_o}
&\hat{\bm{\xi}}_{\rm o}[k+1] =\bm F_{\rm oo}  \hat{\bm{\xi}}_{\rm o}[k]-\hat{\bm L}_k(\bm{y}[k]-\bm{H}_{\rm o} \hat{\bm{\xi}}_{\rm o}[k]) 
\end{align}
for $k\geq 0$ with some initial condition $\hat{\bm P}_0$, where $\hat{\bm L}_k$ and $\hat{\bm P}_k$ are the Kalman gain and error covariance for observable state.

\subsubsection{Prediction for unobservable states} 
Construct
\begin{align}  \label{eq:PKA2_uo}
\hat{\bm{\xi}}_{\rm \bar o}[k+1]
= 
  \bm F_{\rm \bar oo} \hat{\bm{\xi}}_{\rm o}[k] 
+\bm F_{\rm \bar o\bar o} \hat{\bm{\xi}}_{\rm \bar o}[k].
\end{align}

\subsubsection{Update in original coordinate} Update the predicted state  according to the linear transformation    
\eqref{eq:tranformation}, i.e., 
\begin{equation}  \label{eq:PKA1}
\hat {\bm x}[k]=\mathbb T\hat{\bm \xi}[k].
\end{equation}

\section{Main Results}\label{sec:CKF} 

Before we present the main results for the proposed structured Kalman filter, we show a lemma for CKF algorithms.

\begin{lemma}\label{lem:2}
If the initial error covariance $\bm P_0$ satisfies $\bm P_0(I_n\otimes  \overline{V})^\dagger=(I_n\otimes  \overline{V})^\dagger\check{P}_0$
for some $\check{P}_0$,
then
the error covariance of CKF satisfies
$\bm P_k(I_n\otimes  \overline{V})^\dagger=(I_n\otimes  \overline{V})^\dagger\check{\bm P}_k$~with 
\begin{align}\label{lemma1:1}
\check{\bm P}_{k+1}=&(\bm F_{\rm oo}
                          +\check{\bm G}_k\bm H_{\rm o})\check{\bm P}_k\bm F_{\rm oo}^{\sf T}
                         +Q\otimes I_{m-1}
\\   \label{lemma1:2}
\check{\bm G}_k=&-\bm F_{\rm oo}\check{\bm P}_k\bm H_{\rm o}^{\sf T}\bigr(\bm H_{\rm o}\check{\bm P}_k\bm H_{\rm o}^{\sf T}
                          +r(\overline{V}\overline{V}^{\sf T})^{-1}\bigr)^{-1}
\end{align} 
for $\check{\bm P}_0=pI_{nm-n}$ and any $k=0,1,2,\ldots$.
\end{lemma}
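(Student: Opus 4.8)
The plan is to argue by induction on $k$, showing that the single structural identity $\bm P_k(I_n\otimes\overline V)^\dagger=(I_n\otimes\overline V)^\dagger\check{\bm P}_k$ is propagated through one step of the CKF Riccati recursion. Writing $\mathcal V:=I_n\otimes\overline V$, so that $\mathcal V^\dagger=I_n\otimes\overline V^\dagger$, the first task is to collect the Kronecker identities that let me commute $\mathcal V^\dagger$ past every matrix appearing in \eqref{eq:Hanado1}--\eqref{eq:Pk}. The four I expect to need are: (i) the factorization $\bm H=\bm H_{\rm o}\mathcal V$ together with $\overline V\overline V^\dagger=I_{m-1}$ (full row rank of $\overline V$), which give $\mathcal V\mathcal V^\dagger=I_{n(m-1)}$ and hence $\bm H\mathcal V^\dagger=\bm H_{\rm o}$; (ii) the intertwining relations $\bm F\mathcal V^\dagger=\mathcal V^\dagger\bm F_{\rm oo}$ and $\bm F^{\tr}\mathcal V^\dagger=\mathcal V^\dagger\bm F_{\rm oo}^{\tr}$, both immediate from $\bm F=A\otimes I_m$ and $\bm F_{\rm oo}=A\otimes I_{m-1}$; (iii) the noise identity $W\mathcal V^\dagger=\mathcal V^\dagger(Q\otimes I_{m-1})$; and (iv) the transpose relation $\mathcal V^{\tr}=\mathcal V^\dagger(I_n\otimes N)$ with $N:=\overline V\overline V^{\tr}$, which follows from $\overline V^{\tr}=\overline V^\dagger(\overline V\overline V^{\tr})$, plus its companion $(I_n\otimes N)\bm H_{\rm o}^{\tr}=\bm H_{\rm o}^{\tr}N$.

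For the base case, the CKF initialization $\bm P_0=pI_{nm}$ gives $\bm P_0\mathcal V^\dagger=p\mathcal V^\dagger=\mathcal V^\dagger(pI_{n(m-1)})$, so the identity holds with $\check{\bm P}_0=pI_{nm-n}$. For the inductive step I would substitute the gain \eqref{eq:Hanado1} into \eqref{eq:Pk} to obtain the symmetric Riccati form $\bm P_{k+1}=\bm F\bm P_k\bm F^{\tr}-\bm F\bm P_k\bm H^{\tr}(\bm H\bm P_k\bm H^{\tr}+R)^{-1}\bm H\bm P_k\bm F^{\tr}+W$, right-multiply by $\mathcal V^\dagger$, and treat the terms in turn. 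The two outer terms are straightforward: using (ii)--(iii) and the induction hypothesis, $\bm F\bm P_k\bm F^{\tr}\mathcal V^\dagger=\mathcal V^\dagger\bm F_{\rm oo}\check{\bm P}_k\bm F_{\rm oo}^{\tr}$ and $W\mathcal V^\dagger=\mathcal V^\dagger(Q\otimes I_{m-1})$, which already reproduce the two additive pieces of \eqref{lemma1:1}.

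The crux, and the step I expect to be the main obstacle, is the quadratic gain term, since it contains $\bm H\bm P_k\bm H^{\tr}$ and a matrix inverse that live in the full $(m-1)$-dimensional measurement space and must be rewritten through $\check{\bm P}_k$ alone. Here I would invoke (iv): because $\bm H^{\tr}=\mathcal V^{\tr}\bm H_{\rm o}^{\tr}$, the hypothesis yields $\bm P_k\bm H^{\tr}=\mathcal V^\dagger\check{\bm P}_k(I_n\otimes N)\bm H_{\rm o}^{\tr}=\mathcal V^\dagger\check{\bm P}_k\bm H_{\rm o}^{\tr}N$, and likewise $\bm H\bm P_k\bm H^{\tr}=\bm H_{\rm o}\check{\bm P}_k\bm H_{\rm o}^{\tr}N$. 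The extra trailing factor $N$ is exactly what produces the $(\overline V\overline V^{\tr})^{-1}$ term: factoring $\bm H_{\rm o}\check{\bm P}_k\bm H_{\rm o}^{\tr}N+R=(\bm H_{\rm o}\check{\bm P}_k\bm H_{\rm o}^{\tr}+RN^{-1})N$ and inverting, the $N$ cancels against the $N$ carried by $\bm P_k\bm H^{\tr}$, leaving $-\mathcal V^\dagger\bm F_{\rm oo}\check{\bm P}_k\bm H_{\rm o}^{\tr}(\bm H_{\rm o}\check{\bm P}_k\bm H_{\rm o}^{\tr}+RN^{-1})^{-1}\bm H_{\rm o}\check{\bm P}_k\bm F_{\rm oo}^{\tr}$. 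Identifying $RN^{-1}=r^2(\overline V\overline V^{\tr})^{-1}$ (with $R=r^2I_{m-1}$) recovers $\check{\bm G}_k$ in \eqref{lemma1:2} and assembles \eqref{lemma1:1}, closing the induction. The only genuine care needed is bookkeeping the left/right placement of $\mathcal V^\dagger$, $N$, and the transposes, and verifying that the reduced innovation matrix $\bm H_{\rm o}\check{\bm P}_k\bm H_{\rm o}^{\tr}+RN^{-1}$ is invertible so that the cancellation is legitimate.
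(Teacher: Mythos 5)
The paper does not actually contain a proof of Lemma~\ref{lem:2}; it defers entirely to the external reference \cite{yanstructured23}, so there is no in-paper argument to compare against. Judged on its own merits, your induction is correct and essentially complete. The four Kronecker identities you list all hold: $\overline{V}$ has full row rank, so $\overline{V}^\dagger=\overline{V}^{\tr}(\overline{V}\overline{V}^{\tr})^{-1}$ gives $\overline{V}\overline{V}^\dagger=I_{m-1}$, and the intertwining relations $\bm F(I_n\otimes\overline{V}^\dagger)=(I_n\otimes\overline{V}^\dagger)\bm F_{\rm oo}$, $W(I_n\otimes\overline{V}^\dagger)=(I_n\otimes\overline{V}^\dagger)(Q\otimes I_{m-1})$, and $(I_n\otimes\overline{V})^{\tr}=(I_n\otimes\overline{V}^\dagger)(I_n\otimes N)$ with $N=\overline{V}\overline{V}^{\tr}$ are immediate from the mixed-product rule. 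Substituting \eqref{eq:Hanado1} into \eqref{eq:Pk} to get the symmetric Riccati form, right-multiplying by $(I_n\otimes\overline{V})^\dagger$, and pushing that factor through each term is exactly the right mechanism, and your treatment of the crux term — $\bm H\bm P_k\bm H^{\tr}=\bm H_{\rm o}\check{\bm P}_k\bm H_{\rm o}^{\tr}N$ followed by the factorization $\bm H\bm P_k\bm H^{\tr}+R=(\bm H_{\rm o}\check{\bm P}_k\bm H_{\rm o}^{\tr}+RN^{-1})N$ so that the trailing $N$ cancels — correctly reproduces the reduced gain $\check{\bm G}_k$ and closes the induction to give \eqref{lemma1:1}.

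Two points should be tightened. First, your derivation yields the noise term $RN^{-1}=r^2(\overline{V}\overline{V}^{\tr})^{-1}$, whereas \eqref{lemma1:2} as printed reads $r(\overline{V}\overline{V}^{\tr})^{-1}$; since the paper defines $R:=r^2I_{m-1}$, your exponent is the consistent one and the lemma's $r$ is evidently a typographical slip — you should state this discrepancy explicitly rather than silently "identifying" the two expressions, because as written your conclusion does not literally match the statement. Second, the invertibility you flag at the end needs no separate verification: from your own factorization, $\bm H_{\rm o}\check{\bm P}_k\bm H_{\rm o}^{\tr}+RN^{-1}=(\bm H\bm P_k\bm H^{\tr}+R)N^{-1}$ is a product of two invertible matrices, since $\bm H\bm P_k\bm H^{\tr}+R\succ0$ (as $R\succ0$ and $\bm P_k\succeq0$ for the CKF covariance) and $N\succ0$; adding this one line removes the only conditional statement in your argument. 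With those two repairs the proof is complete and self-contained.
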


\begin{proof}
The proof is shown in \cite{yanstructured23}.
\end{proof} 

\begin{theorem}   \label{prop:partial_kalman_obse}
Consider an $m$-clock ensemble to generate an accurate time scale for a given initial guess $\hat{\bm x}[0]$.
If $\varGamma=0$ (resp., $\hat{\bm P}_0=(I_n\otimes\overline{V})\bm P_0(I_n\otimes\overline{V})^{\tr}$), 
then the structured Kalman filter \eqref{eq:Kalmangain}--\eqref{eq:PKA1} ideally yield the same predicted unobservable state $\hat{\bm{\xi}}_{\rm \bar o}$ (resp., observable state $\hat{\bm{\xi}}_{\rm o}$) as CKF.
In particular, if $\varGamma=0$ and $\hat{\bm P}_0=(I_n\otimes\overline{V})\bm P_0(I_n\otimes\overline{V})^{\tr}$, 
then the structured Kalman filter ideally reduces the same as CKF algorithms yielding the same predicted state $\hat{\bm x}[k]$ for all $k \in \{0,1,2,\ldots\}$.
\end{theorem}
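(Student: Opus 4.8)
The plan is to run the structured filter and the CKF in parallel and match them block-by-block after the change of coordinates $\bm{\xi}=\mathbb T^{-1}\bm{x}$. The first step is to invert $\mathbb T$ explicitly. Using $\overline{V}\mathds{1}_m=0$, $\overline{V}\overline{V}^\dagger=I_{m-1}$, $\mathds{1}_m^{\tr}\overline{V}^\dagger=0$ and $\mathds{1}_m^{\tr}\mathds{1}_m=m$, one checks that the first factor $[\,I_n\otimes\overline{V}^{\dagger}\ \ I_n\otimes\mathds{1}_m\,]$ is inverted by the vertical stack of $I_n\otimes\overline{V}$ and $\tfrac1m(I_n\otimes\mathds{1}_m^{\tr})$; composing with the triangular, trivially invertible second factor of $\mathbb T$ gives $\bm{\xi}_{\rm o}=(I_n\otimes\overline{V})\bm{x}$, which is independent of $\varGamma$, and $\bm{\xi}_{\rm\bar o}=-\varGamma(I_n\otimes\overline{V})\bm{x}+\tfrac1m(I_n\otimes\mathds{1}_m^{\tr})\bm{x}$. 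Since both filters share the initial guess $\hat{\bm{x}}[0]$, this already yields $\hat{\bm{\xi}}_{\rm o}[0]=(I_n\otimes\overline{V})\hat{\bm{x}}[0]$ for every $\varGamma$.

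For the observable block I would show that the projected CKF covariance $\tilde{\bm P}_k:=(I_n\otimes\overline{V})\bm P_k(I_n\otimes\overline{V})^{\tr}$ obeys exactly \eqref{eq:Lk_PKF}. The intertwining relations $(I_n\otimes\overline{V})\bm F=\bm F_{\rm oo}(I_n\otimes\overline{V})$ and $\bm H=\bm H_{\rm o}(I_n\otimes\overline{V})$, together with $(I_n\otimes\overline{V})W(I_n\otimes\overline{V})^{\tr}=W_{\rm o}$, let me conjugate \eqref{eq:Pk} by $(I_n\otimes\overline{V})$ and conclude by induction that $\tilde{\bm P}_k=\hat{\bm P}_k$, the base case being precisely the hypothesis $\hat{\bm P}_0=(I_n\otimes\overline{V})\bm P_0(I_n\otimes\overline{V})^{\tr}$. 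The same relations give $(I_n\otimes\overline{V})\bm L_k=\hat{\bm L}_k$, so that projecting \eqref{eq:Hanado2} by $(I_n\otimes\overline{V})$ reproduces \eqref{eq:PKA2_o} verbatim, and matching initial conditions yields $\hat{\bm{\xi}}_{\rm o}[k]=(I_n\otimes\overline{V})\hat{\bm{x}}[k]$ for all $k$.

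For the unobservable block with $\varGamma=0$ we have $\bm F_{\rm\bar oo}=0$ and $\bm F_{\rm\bar o\bar o}=A$, so \eqref{eq:PKA2_uo} is the autonomous recursion $\hat{\bm{\xi}}_{\rm\bar o}[k+1]=A\hat{\bm{\xi}}_{\rm\bar o}[k]$. Projecting \eqref{eq:Hanado2} by $\tfrac1m(I_n\otimes\mathds{1}_m^{\tr})$ and using $\tfrac1m(I_n\otimes\mathds{1}_m^{\tr})\bm F=A\cdot\tfrac1m(I_n\otimes\mathds{1}_m^{\tr})$ gives the same propagation plus a correction proportional to $(I_n\otimes\mathds{1}_m^{\tr})\bm P_k(I_n\otimes\overline{V}^{\tr})$. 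The crux is to show this correction vanishes: writing $I_n\otimes\overline{V}^{\tr}=(I_n\otimes\overline{V}^{\dagger})(I_n\otimes\overline{V}\,\overline{V}^{\tr})$ and invoking Lemma~\ref{lem:2} — applicable since the CKF initial $\bm P_0=pI_{nm}$ meets its hypothesis with $\check{\bm P}_0=pI_{nm-n}$ — replaces the factor $\bm P_k(I_n\otimes\overline{V}^{\dagger})$ by $(I_n\otimes\overline{V}^{\dagger})\check{\bm P}_k$, which is annihilated on the left by $(I_n\otimes\mathds{1}_m^{\tr})$ because $\mathds{1}_m^{\tr}\overline{V}^{\dagger}=0$. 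Hence the CKF unobservable projection also evolves by $A$ alone, and the shared initial condition gives $\hat{\bm{\xi}}_{\rm\bar o}[k]=\tfrac1m(I_n\otimes\mathds{1}_m^{\tr})\hat{\bm{x}}[k]$. When both hypotheses hold I then reassemble through $\hat{\bm{x}}[k]=\mathbb T\hat{\bm{\xi}}[k]$ with $\varGamma=0$, and the orthogonal-projection identity $\overline{V}^{\dagger}\overline{V}+\tfrac1m\mathds{1}_m\mathds{1}_m^{\tr}=I_m$ collapses the two blocks back to $\hat{\bm{x}}[k]$, so the filters coincide. I expect the vanishing of the unobservable correction term to be the main obstacle, as it is exactly the observable/unobservable coupling through $\bm L_k$ that Lemma~\ref{lem:2} is needed to neutralize; everything else is bookkeeping with the Kronecker intertwining relations.
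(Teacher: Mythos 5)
Your proposal is correct and takes essentially the same route as the paper: the same observable/unobservable split via the projections $(I_n\otimes\overline{V})$ and $\tfrac{1}{m}(I_n\otimes\mathds{1}_m^{\tr})$, and the same crucial use of Lemma~\ref{lem:2} to conclude $(I_n\otimes\mathds{1}_m^{\tr})\bm L_k=0$, which is exactly the paper's identity \eqref{eq:zero_equality}. The only differences are stylistic: you compare predicted states where the paper compares prediction errors, and your observable-block induction conjugates the CKF Riccati recursion \eqref{eq:Pk} by $(I_n\otimes\overline{V})$ directly (so Lemma~\ref{lem:2} is never needed for that block), whereas the paper routes the observable part through the $\check{\bm P}_k$ recursion of Lemma~\ref{lem:2} -- your version is a mild, self-contained simplification of the same argument.
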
 

\begin{proof} 
First, we prove equivalency of the unobservable states $\hat{\bm{\xi}}_{\rm \bar o}[k]$ in the two methods. 
Note from $\mathds{1}_{m}^{\sf T}\overline{V}^\dagger=0$ and Lemma~\ref{lem:2}
that
\begin{align}\nonumber
0=&(A\!\otimes\!\mathds{1}_m^{\tr})(I_n\!\otimes\!  \overline{V})^\dagger\check{\bm P}_k\bm H_{\rm o}(I_n\!\otimes\!\overline{V}\overline{V}^{\sf T}\!)\bigr(\bm{H}\bm P_k{ \bm H}^{\sf T}\!\!+\!R\bigr)^{-1}
\\ \nonumber
=&(A\!\otimes\!\mathds{1}_m^{\tr}) \bm P_k(I_n\!\otimes\!  \overline{V})^\dagger\bm H_{\rm o}(I_n\!\otimes\!\overline{V}\overline{V}^{\sf T}\!)\bigr(\bm{H}\bm P_k{ \bm H}^{\sf T}\!\!+\!R\bigr)^{-1}\\ \nonumber
=&(I_n\otimes\mathds{1}_m^{\tr})\bm F \bm P_k {\bm H} ^{\sf T}\bigr(\bm{H}\bm P_k{ \bm H}^{\sf T}+R\bigr)^{-1}
\\  \label{eq:zero_equality}
=&-(I_n \otimes\mathds{1}_m^{\tr})\bm L_k,\quad k=0,1,2,\ldots
\end{align}
Then, since the prediction error $\bm{\epsilon}:=\bm x-\hat{\bm x}$ of CKF is   
\[
\bm{\epsilon} [k+1]=(\bm F+\bm L_k\bm{H})\bm{\epsilon}[k]+\bm L_k\bm w[k]+\bm v[k] 
\]
it follows that the prediction error $\bm{\epsilon}_{\rm \bar o}:={\bm{\xi}}_{\rm \bar o}-\hat{\bm{\xi}}_{\rm \bar o}=\bigl(I_n \otimes\mathds{1}_m^{\tr})\bm\epsilon$ of unobservable state in CKF is given by 
\begin{align}\nonumber
\bm{\epsilon}_{\rm \bar o}[k+1] 
=&A\bm{\epsilon}_{\rm \bar o}[k]+\bigl( I_n \otimes\mathds{1}_m^{\tr}\bigr) \bm{v}[k]   
\\  \nonumber
&+(I_n\otimes\mathds{1}_m^{\tr})\bm L_k(\bm{H} \bm{\epsilon}[k]+ \bm w[k])
\\ \label{eq:weighted_error}
=&A\bm{\epsilon}_{\rm \bar o}[k]+\bigl( I_n \otimes\mathds{1}_m^{\tr}\bigr) \bm{v}[k].   
\end{align} 
For the structured Kalman filter,
the predicted state $\hat{\bm{x}}$ follows
\begin{align}\nonumber
\hat{\bm{x}}[k+1] = & \mathbb T\left( \mat{\bm F_{\rm oo} +\hat{\bm L}_k\bm H_{\rm o}  \!\!  &  0 \\ \bm F_{\rm \bar o o} \!\! & \bm F_{\rm \bar o \bar o} }\right)\mathbb T^{-1}    \hat{\bm{x}}[k]- \bm G_k\bm{y}[k]
\\ \nonumber  
=&\Big( \bm F+\mathbb T \mat{\hat{\bm L}_k\bm H_{\rm o}   &  0 \\ 0  & 0} \mathbb T^{-1} \Big)   \hat{\bm{x}}[k]- \bm G_k\bm{y}[k]\\  \nonumber
=& ( \bm F+\bm G_k \bm H )   \hat{\bm{x}}[k]\!-\!\bm G_k(\bm H{\bm{x}}[k]+\bm{w}[k])
\end{align}
where the matrix $\bm G_k$ is defined as
\[
\bm G_k:=\mathbb T\mat{\hat{\bm L}_k\\ 0 }=\Big[I_n\otimes \overline{V}^{\dagger}+(I_n\otimes  \mathds{1}_m)\varGamma\Big]\hat{\bm L}_k.
\]
Then it follows from the prediction error given by 
\begin{align}\label{eq:erro_dynamics_Kalman2}
{\bm{\epsilon}}[k+1] =&(\bm F+\bm G_k \bm H){\bm{\epsilon}}[k]+\bm G_k  \bm{w}[k]+ \bm{v}[k]
\end{align}
that the prediction error $\bm{\epsilon}_{\rm \bar o}$ of unobservable state in structured Kalman filter is given by
\begin{align}  \nonumber
\bm{\epsilon}_{\rm \bar o}[k+1]=&A\bm{\epsilon}_{\rm \bar o}[k]+\bigl(I_n\otimes\mathds{1}_m^{\tr}\bigr) \bm{v}[k] 
\\ \label{eq:erro_dynamics_Kalman}
&+m\varGamma\hat{\bm L}_k(\bm{H}\bm{\epsilon}[k]+\bm w[k]) 
\end{align}
which reduces the same as \eqref{eq:weighted_error} for CKF when $\varGamma=0$.

Next we prove equivalency of observable states $\hat{\bm{\xi}}_{\rm o}[k]$ in the two methods. 
Note that error covariance of the observable state ${\bm{\xi}}_{\rm o}[k]$ of CKF is
\begin{align}\nonumber
\bm P_{\rm o}[k+1] =&(I_n\otimes \overline{V})\bm P_{k+1}(I_n\otimes \overline{V})^{\sf T}
                                 =\check{\bm P}_{k+1}(I_n\!\otimes\!\overline{V}\overline{V}^{\sf T})
                                 \\ \nonumber
                              =&(\bm F_{\rm oo}
                          +\check{\bm G}_k\bm H_{\rm o})\check{\bm P}_k\bm F_{\rm oo}^{\sf T}
                                (I_n\otimes\overline{V}\overline{V}^{\sf T})
                                +W_{\rm o}
\\ \nonumber
                             =&(\bm F_{\rm oo}
                          +\check{\bm G}_k\bm H_{\rm o})\check{\bm P}_k
                                 (I_n\otimes\overline{V}\overline{V}^{\sf T})
                                 \bm F_{\rm oo}^{\sf T}
                                +W_{\rm o}
                                \\ \label{eq:Lk_CKF_o}
                             =&(\bm F_{\rm oo}
                          +\check{\bm G}_k\bm H_{\rm o}){\bm P}_{\rm o}[k]
                                 \bm F_{\rm oo}^{\sf T}
                                +W_{\rm o}
\\ \nonumber                         
\check{\bm G}_k=&-\bm F_{\rm oo}\bm P_{\rm o}[k]\bm H_{\rm o}^{\sf T}(\overline{V}\overline{V}^{\sf T})^{-1}
                               \\  \nonumber
                             &\cdot\bigr(\bm H_{\rm o}\bm P_{\rm o}[k]\bm H_{\rm o}^{\sf T}
                             (\overline{V}\overline{V}^{\sf T})^{-1}
                             +r(\overline{V}\overline{V}^{\sf T})^{-1}\bigr)^{-1}
                            \\ \label{eq:Kalmangain_CKF_o}
                             =&-\bm F_{\rm oo}\bm P_{\rm o}[k]\bm H_{\rm o}^{\sf T}
                             \bigr(\bm H_{\rm o}\bm P_{\rm o}[k]\bm H_{\rm o}^{\sf T}\!+\!R\bigr)^{-1}
\end{align}
which indicate along with condition
$\hat{\bm P}_0=(I_n\otimes\overline{V})\bm P_0( I_n\otimes\overline{V})^{\tr}$  
that $\bm P_{\rm o}[k+1]=\hat{\bm P}_{k+1}$ for the error covariance $\hat{\bm P}_{k+1}$ defined in \eqref{eq:Lk_PKF} and $\check{\bm G}_k=\hat{\bm L}_k$ for the Kalam gain $\hat{\bm K}_k$ defined in \eqref{eq:Kalmangain}.
Thus, the prediction error $\bm{\epsilon}_{\rm o}:={\bm{\xi}}_{\rm o}-\hat{\bm{\xi}}_{\rm o}=\bigl(I_n \otimes\overline{V})\bm\epsilon$ of observable state under CKF is the same as structured Kalman filter, which completes the proof. 
\end{proof}

\begin{remark}  
It follows from the definition of atomic time that 
${\rm TA}[k]=\frac{1}{m}C\bm{\epsilon}_{\rm \bar o}[k]$.
Thus, Theorem~\ref{prop:partial_kalman_obse} implies that ${\rm TA}[k]$ under CKF and structured Kalman filter are the same when the condition $\varGamma=0$ is satisfied, 
and they are ideally independent from the measurement noises $\bm{w}[0],\ldots,\bm{w}[k]$ (see the expression of $\bm{\epsilon}_{\rm \bar o}$ in \eqref{eq:weighted_error}).
In addition, it is important to note that both of error covariance \eqref{eq:Lk_CKF_o} along with \eqref{eq:Kalmangain_CKF_o} and error covariance \eqref{eq:Lk_PKF} along with \eqref{eq:Kalmangain} are converging to the same steady state. 
Therefore, even for the case of 
$ 
\hat{\bm P}_0\not=(I_n\otimes \overline{V})\bm P_0(I_n\otimes\overline{V})^{\tr}$,
the updated observable states $\hat{\bm{\xi}}_{\rm o}$ of those two methods are eventually the same for $\varGamma$ if there is no computation error.
In other words, the essential difference between the CKF and our structured Kalman filter algorithms comes from the third term of the the unobservable prediction error $\bm{\epsilon}_{\rm\bar o}$ in \eqref{eq:erro_dynamics_Kalman}.
\end{remark}   
   
\begin{remark}    
The CKF algorithms may result in numerical instability problems in practical implementation for the undetectable system of the atomic clock ensemble.
This is because the entries of the error covariance matrix and the computational errors of the CKF algorithms grow  unboundedly \cite{greenhall2012review,godel2017kalman}. 
However, Theorem~\ref{prop:partial_kalman_obse} indicates that the proposed structured
Kalman filter can be used as an alternative method of CKF algorithms avoiding 
numerical instability. 
\end{remark}   

Recalling from \eqref{system_equations} and \eqref{eq:PKA2_uo} that the proposed method is associated with the transformation matrix $\varGamma$, which affects the predictions of unobservable states within the transformed system and hence influences the performance of the atomic time. 
Therefore, although the proposed
structured Kalman filter with $\varGamma=0$ is an alternative to the CKF algorithms, it may be possible to further improve the time scale by addressing an optimization problem with respect to the transformation matrix.
In such a case, the conditions for determining the optimal $\varGamma$ are derived based on the prediction error and the error covariance resulting from the application of Kalman filter to the transformed system.  
 
\begin{theorem}    \label{thm:d4}  
Consider an $m$-clock ensemble with the structured Kalman filter algorithm \eqref{eq:Kalmangain}--\eqref{eq:PKA1} and the measurement 
$\{\bm{y}[k]: k=0,1,\ldots, T\}$.
It follows that the cost function
\begin{equation}\label{eq:payoff}
J(\varGamma):=\sum\nolimits_{k=0}^T\Big(\delta_1(\mathbb E[{\rm TA}[k]])^2+\delta_2\mathbb V[{{\rm TA}}[k]]\Big) 
\end{equation}
is a convex function with respect to $\varGamma$ for any $\delta_1,\delta_2\geq0$, 
$\delta_1+\delta_2\not=0$,
where $\mathbb V[{{\rm TA}}[k]]$ represents the variance of atomic time ${\rm TA}[k]$ for a given $\varGamma$.
Furthermore,  if the initial predicted state $\bm{\epsilon}[0]\sim\mathcal N(\mu_0,Q_0)$ satisfies  
$Q_0=\hat Q_0\otimes pI_m$ and 
$\mu_0=\hat{ \mu}_0\otimes\mathds{1}_m$ for some positive definite matrix $\hat Q_0=\hat Q_0^{\tr}\in\mathbb R^{n\times n}$, 
some $p\geq0$, 
and some vector $\hat{ \mu}_0\in\mathbb R^{n}$, 
then $\varGamma=0$ minimizes the cost function \eqref{eq:payoff}
in structured Kalman filter. 
      
\end{theorem}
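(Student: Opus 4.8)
The plan is to show that ${\rm TA}[k]$ is \emph{affine} in $\varGamma$, so that convexity is automatic and optimality of $\varGamma=0$ collapses to a vanishing-gradient check. First I would establish that the observable subsystem is completely $\varGamma$-free. Inverting the transformation gives $\mathbb T^{-1}=\mat{I_n\otimes\overline{V}\\ -\varGamma(I_n\otimes\overline{V})+\tfrac1m(I_n\otimes\mathds{1}_m^{\tr})}$, so that $\hat{\bm{\xi}}_{\rm o}[0]=(I_n\otimes\overline{V})\hat{\bm x}[0]$, the observable error $\bm{\epsilon}_{\rm o}[k]=(I_n\otimes\overline{V})\bm{\epsilon}[k]$, the gain $\hat{\bm L}_k$ and the covariance $\hat{\bm P}_k$ are all independent of $\varGamma$; consequently so is the innovation-type signal $\bm{\nu}[k]:=\hat{\bm L}_k(\bm H_{\rm o}\bm{\epsilon}_{\rm o}[k]+\bm w[k])$. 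Using $\bm H\bm{\epsilon}[k]=\bm H_{\rm o}\bm{\epsilon}_{\rm o}[k]$ and unrolling \eqref{eq:erro_dynamics_Kalman} from the ($\varGamma$-dependent) initial value $\bm{\epsilon}_{\rm\bar o}[0]=\tfrac1m(I_n\otimes\mathds{1}_m^{\tr})\bm{\epsilon}[0]-\varGamma\bm{\epsilon}_{\rm o}[0]$ yields ${\rm TA}[k]=\tfrac1m C\bm a[k]+\ell_k(\varGamma)$, where $\bm a[k]$ is a $\varGamma$-free random vector and $\ell_k(\varGamma):=-\tfrac1m CA^k\varGamma\bm{\epsilon}_{\rm o}[0]+\sum_{j=0}^{k-1}CA^{k-1-j}\varGamma\bm{\nu}[j]$ is linear in $\varGamma$. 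The key structural observation is that every matrix/vector multiplying $\varGamma$ here is an observable-side signal built from $\bm{\epsilon}_{\rm o}[0]$ and $\bm{\nu}[j]$.

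For convexity I would set $\gamma:=\mathrm{vec}(\varGamma)$ and write $\ell_k(\varGamma)=\bm b_k^{\tr}\gamma$. Then $\mathbb E[{\rm TA}[k]]$ is affine in $\gamma$, so $(\mathbb E[{\rm TA}[k]])^2$ is convex; and $\mathbb V[{\rm TA}[k]]=\mathbb V[\tfrac1m C\bm a[k]]+2\,\mathrm{Cov}(\tfrac1m C\bm a[k],\bm b_k)^{\tr}\gamma+\gamma^{\tr}\mathrm{Cov}(\bm b_k)\gamma$ is a constant plus an affine term plus the positive-semidefinite quadratic form $\gamma^{\tr}\mathrm{Cov}(\bm b_k)\gamma$, hence convex. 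A nonnegative combination of these, summed over $k$, is convex for all $\delta_1,\delta_2\ge0$ with $\delta_1+\delta_2\neq0$, which settles the first claim.

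Because $J$ is convex and quadratic, $\varGamma=0$ is a global minimizer iff $\nabla_\gamma J(0)=\sum_k\bigl(2\delta_1\,\mathbb E[\tfrac1m C\bm a[k]]\,\mathbb E[\bm b_k]+2\delta_2\,\mathrm{Cov}(\tfrac1m C\bm a[k],\bm b_k)\bigr)=0$, and I would eliminate both contributions with the single identity $\overline{V}\mathds{1}_m=0$ together with the assumed Kronecker structure. The mean part vanishes because $\mathbb E[\bm{\epsilon}_{\rm o}[0]]=(I_n\otimes\overline{V})\mu_0=\hat{\mu}_0\otimes(\overline{V}\mathds{1}_m)=0$ propagates to $\mathbb E[\bm{\epsilon}_{\rm o}[k]]=0$ and $\mathbb E[\bm{\nu}[k]]=0$, whence $\mathbb E[\bm b_k]=0$. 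The covariance part vanishes because $\bm a[k]$ depends only on the $(I_n\otimes\mathds{1}_m^{\tr})$-projections of $\bm{\epsilon}[0]$ and of the process noise, whereas $\bm{\epsilon}_{\rm o}[0]$ and $\bm{\nu}[j]$ depend only on $(I_n\otimes\overline{V})$-projections and on $\bm w$, and the relevant cross-covariances annihilate: $(I_n\otimes\mathds{1}_m^{\tr})Q_0(I_n\otimes\overline{V})^{\tr}=\hat{Q}_0\otimes p(\overline{V}\mathds{1}_m)^{\tr}=0$ and $(I_n\otimes\mathds{1}_m^{\tr})W(I_n\otimes\overline{V})^{\tr}=0$, with $\bm w$ independent of $\bm a[k]$. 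Thus $\mathrm{Cov}(\tfrac1m C\bm a[k],\bm b_k)=0$, so $\nabla_\gamma J(0)=0$ and $\varGamma=0$ is optimal.

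The main obstacle is the bookkeeping in the first step: one must notice that $\bm{\epsilon}_{\rm\bar o}[0]$ itself carries a $-\varGamma\bm{\epsilon}_{\rm o}[0]$ term, so the $\varGamma$-free part $\bm a[k]$ has to be isolated with care, and one must verify that \emph{every} coefficient of $\varGamma$ in $\ell_k$ is an observable-side signal — this is exactly what lets the $\overline{V}\mathds{1}_m=0$ orthogonality kill both gradient pieces under the structured $\mu_0$, $Q_0$ and the Kronecker form $W=\tilde W\otimes I_m$. Treating the variance gradient as a genuine quadratic form in $\gamma$, rather than manipulating scalars, is the other point requiring care.
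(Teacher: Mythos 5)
Your overall strategy is, at its core, the same mechanism the paper uses: write ${\rm TA}[k]$ as a $\varGamma$-free random part plus a part linear in $\varGamma$ whose coefficients are purely observable-side signals, deduce convexity from the affine mean and the positive-semidefinite quadratic variance, and kill the gradient at $\varGamma=0$ using $\overline{V}\mathds{1}_m=0$ together with the Kronecker structure of $\mu_0$, $Q_0$ and $W$. Where you differ is the bookkeeping: the paper proves by induction that the transition products $\lambda_i^k$ satisfy \eqref{eq:proof_KSk} and \eqref{eq:fffs2}, i.e.\ that $\bm D\lambda_i^k$ and $\bm D\lambda_i^k\bm G_i$ are affine in $\varGamma$ with $\varGamma$-free coefficients $\bm \Phi^i_{k-\delta}$ factoring through $CA^{\lambda_j}\otimes\overline{V}$, and then reads off the linear terms inside the deterministic mean/variance formulas; you instead unroll the unobservable-error recursion and treat the $\varGamma$-coefficients as a random vector $\bm b_k$, handling everything through $\mathbb E[\bm b_k]$ and $\mathrm{Cov}(\tfrac1m C\bm a[k],\bm b_k)$. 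Your organization is arguably more transparent: the decomposition $\mathbb V[X+\bm b_k^{\tr}\gamma]=\mathbb V[X]+2\,\mathrm{Cov}(X,\bm b_k)^{\tr}\gamma+\gamma^{\tr}\mathrm{Cov}(\bm b_k)\gamma$ makes both the convexity and the vanishing-gradient condition explicit, where the paper's $\bm\Phi$-calculus buries them in matrix identities; both proofs ultimately rest on the same two orthogonality facts.

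There is one concrete slip, in your formula for $\ell_k$: the term $-\tfrac1m CA^k\varGamma\bm{\epsilon}_{\rm o}[0]$ should not be there. The recursion \eqref{eq:erro_dynamics_Kalman} governs the projection $(I_n\otimes\mathds{1}_m^{\tr})\bm\epsilon$ (the paper's convention), whose initial value $(I_n\otimes\mathds{1}_m^{\tr})\bm\epsilon[0]$ is $\varGamma$-free; the initial value you used, $\tfrac1m(I_n\otimes\mathds{1}_m^{\tr})\bm{\epsilon}[0]-\varGamma\bm{\epsilon}_{\rm o}[0]$, is that of the transformed-coordinate error $\bm{\xi}_{\rm\bar o}-\hat{\bm{\xi}}_{\rm\bar o}$, which obeys a \emph{different} recursion (driven by $\bm F_{\rm\bar oo}\bm{\epsilon}_{\rm o}$), and in the exact identity ${\rm TA}[k]=C\varGamma\bm{\epsilon}_{\rm o}[k]+C\bigl(\bm{\xi}_{\rm\bar o}[k]-\hat{\bm{\xi}}_{\rm\bar o}[k]\bigr)$ those $\varGamma\bm{\epsilon}_{\rm o}$ contributions telescope away. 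The correct unrolled expression is ${\rm TA}[k]=\tfrac1m C\bm a[k]+\sum_{j=0}^{k-1}CA^{k-1-j}\varGamma\bm{\nu}[j]$, with no initial-condition term. Mixing the two conventions produced a spurious term; it happens to be harmless for your argument, because it is linear in $\varGamma$ with an observable-side coefficient $\bm{\epsilon}_{\rm o}[0]$, so your convexity step is unaffected and your gradient step annihilates it by exactly the identities you invoke, namely $(I_n\otimes\overline{V})\mu_0=\hat\mu_0\otimes\overline{V}\mathds{1}_m=0$ and $(I_n\otimes\mathds{1}_m^{\tr})Q_0(I_n\otimes\overline{V})^{\tr}=0$. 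The conclusions therefore stand, but the formula itself should be corrected, since it would give wrong answers if used quantitatively.
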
   
\begin{proof}
First of all, let $\lambda_0^k,\ldots,\lambda_k^k$ be defined as $\lambda_k^k=I$ and $\lambda_i^k:=\prod\nolimits_{j=k}^{i+1} (\bm F + \bm G_j \bm H)=\prod\nolimits_{j=k}^{i+1}[\bm F +(\bm G_j^1+\bm G_j^2)\bm H]$ for $ i<k$, 
where $\bm{G}_j^1:=(I_n\otimes \overline{V}^{\dagger}) \hat{\bm L}_j$ and $\bm{G}_j^2:= (I_n\otimes\mathds{1}_m)\varGamma\hat{\bm L}_j$.
Using $\bigl(I_n\otimes\mathds{1}_m^{\tr}\bigr)\bm{G}_k^1=0$, 
$\bigl(I_n\otimes\mathds{1}_m^{\tr}\bigr)\bm{G}_k^2=\varGamma\hat{\bm L}_k$, 
$\bm H\lambda_i^k=\bm H\prod\nolimits_{j=k}^{i+1}(\bm F+\bm{G}_j^1\bm H)$, $i<k$,
it is obtained from mathematical induction \cite{yanstructured23} that 
\begin{align} \nonumber
&\bm D\lambda_i^k
=\tfrac{C}{m}\bigl(I_n\otimes\mathds{1}_m^{\tr}\bigr)\lambda_i^k
=\tfrac{C}{m}\bigl[A\bigl(I_n\!\otimes\!\mathds{1}_m^{\tr}\bigr)+m\varGamma\hat{\bm K}_k\bm H\bigr]\lambda_i^{k-1} 
\\ \nonumber
&=C\sum\nolimits_{\delta=0}^{k-i-1}\Big( A^\delta\varGamma \hat{\bm K}_{k-\delta } \underbrace{\bm H  \prod\nolimits_{j=k-\delta-1}^{i+1}(\bm F+
\bm{G}_j^1\bm H)}_{\bm \Phi^i_{k-\delta}}\Big) 
\\   \label{eq:proof_KSk} 
&\quad +\tfrac{C}{m}A^{k-i}\bigl(I_n\otimes\mathds{1}_m^{\tr}\bigr),\quad i<k,
\\ \nonumber
&\bm D\lambda_i^k\bm G_i
=C\sum\nolimits_{\delta =0}^{k-i-1}\!\!A^\delta\varGamma \hat{\bm K}_{k-\delta }\bm \Phi^i_{k-\delta}\bm G_i^1
\\
\label{eq:fffs2} 
&\quad\quad\quad\quad\ +\tfrac{C}{m}A^{k-i}\bigl(I_n\otimes\mathds{1}_m^{\tr}\bigr)\varGamma \hat{\bm K}_i
,\quad  i< k.
\end{align}
which are linear to $\varGamma$.

Note from \eqref{eq:erro_dynamics_Kalman2} that the expected value and variance of the prediction error are 
$\mathbb E[\bm{\epsilon}[k]]=\lambda_0^k \mu_0$, 
$\mathbb V[{\bm{\epsilon}} [k]]=\lambda_0^kQ_0(\lambda_0^k)^{\tr} +\sum\nolimits_{i=1}^k\lambda_i^k(\bm G_iR\bm G_i^{\tr} +W)(\lambda_i^k)^{\tr}$, respectively.
Thus, \eqref{eq:proof_KSk}, \eqref{eq:fffs2} imply that the expected value 
$\mathbb E[{\rm TA}[k]]= \bm D \lambda_0^k  \mu_0$ is linear to $\varGamma$, whereas the variance 
$\mathbb V[{{\rm TA}}[k]]= \bm D\mathbb V[{{\bm{\epsilon}}}[k]]\bm D^{\tr}=\bm D\lambda_0^kQ_0(\lambda_0^k)^{\tr}\bm D^{\tr} +\sum\nolimits_{i=1}^k\bm D\lambda_i^k(\bm G_iR\bm G_i^{\tr} +W)(\bm D\lambda_i^k)^{\tr}$
is quadratic to $\varGamma$.
Now, since the variance $\mathbb V[{{\rm TA}}[k]]$ is nonnegative, it follows that the function $\mathbb V[{{\rm TA}}[k]]$ is convex to $\varGamma$ 
and hence the cost function $J(\varGamma)$ is a convex quadratic function in $\varGamma$ for any $\delta_1,\delta_2\geq0$, $\delta_1+\delta_2\not=0$. 
Then, due to the convexity and differentiability of the cost function \eqref{eq:payoff} with $\varGamma\in\mathbb R^{n\times m(n-1)}$ in unbounded space, 
the matrix $\varGamma=0$ is optimal if and only if there is no linear term in neither $\mathbb E[{\rm TA}[k]]$ nor $\mathbb V[{{\rm TA}}[k]]$ for $\varGamma$. 
Hence, the result is immediate since it follows from \eqref{eq:proof_KSk}, \eqref{eq:fffs2}  that possible linear terms in $\mathbb E[{\rm TA}[k]]$ and $\mathbb V[{{\rm TA}}[k]]$ given by 
\begin{align} 
   \nonumber 
&\sum\nolimits_{\delta =0}^{k-1} C A^\delta\varGamma \hat{\bm K}_{k-\delta } 
\underbrace{\bm \Phi^0_{k-\delta}\mu_0}_{=0} 
\\     \nonumber 
&\tfrac{1}{m}\sum\nolimits_{\delta =0}^{k-1}C  A^\delta\varGamma \hat{\bm K}_{k-\delta }   
\underbrace{\bm \Phi^0_{k-\delta}  Q_0\bigl( I_n \otimes\mathds{1}_m \bigr)}_{=0}{(CA^k)}^{\tr} 
\end{align}
are all zero when $Q_0=\hat Q_0\otimes pI_m$ and $\mu_0=\hat{ \mu}_0\otimes\mathds{1}_m$ because
\[
\bm \Phi^i_{k-\delta}
= \sum\nolimits_{j=1}^{2^{k-\delta-i-1}}\!\! \bm \phi_j \bm H \bm F^{\lambda_j}
= \sum\nolimits_{j=1}^{2^{k-\delta-i-1}}\!\! \bm \phi_j (CA^{\lambda_j} \otimes \overline{V})
\]
for some matrices $\bm \phi_j$ and integer $\lambda_j\in\{0,1,\ldots,k-\delta-i-1\}$.
The proof is complete. 
\end{proof}
  
\begin{remark}   
Theorem~\ref{thm:d4} along with Theorem~\ref{prop:partial_kalman_obse} indicates that when the conditions $Q_0=\hat Q_0\otimes pI_m$ and $\mu_0=\hat{ \mu}_0\otimes\mathds{1}_m$ are satisfied, there is no more space to further narrow the confidence interval of atomic time from the CKF algorithms. 
In practice, if the homogeneous atomic clocks utilized in the ensemble are located at the same laboratory, then the conditions for $Q_0$ and $\mu_0$ are usually realistic. 
This is because a reference signal such as UTC$[k]$ (Coordinated Universal Time) is usually received as an external source at the initial time $k=0$ for determining the initial guess $\hat{\bm x}[0]$,
where UTC$[k]$ can be regarded as the approximated ideal time with a tiny time deviation.
In such a case, the initial predicted error of the time deviation of the clocks can be regarded as the same with the variances being 0. 
However, if the clocks are located in different laboratories with independent receivers for UTC data, then the conditions may not hold since the received UTC$[0]$ may be diverse to the laboratories due to the synchronous requested timing in practice.
\end{remark}   


\section{Numerical Simulations} \label{sec:example}
This section provides an application example to demonstrate our results and compares the performance of the atomic time among the proposed method and the CKF algorithms with and without the existing covariance correction method by Greenhall \cite{greenhall2012review,godel2017kalman}.
In particular, we consider a third-order atomic clock ensemble with $m=5$ clocks.
The coefficients  of the model are set to  
 $ q_1^2=2.9394e-10$, $q_2^2=1.1785e-16$, and  $q_3^2= 4.5574e-35$.  
The sampling interval  is set to $\tau=1$.
The variances of measurement noises are set to $r_i^2=1e-12$, $i=1,2,3,4,5$.
In the following statements, we verify the results of Theorems~\ref{prop:partial_kalman_obse} and \ref{thm:d4}, and give a discussion for the robustness of the proposed structured Kalman filter concerning the initial prediction error covariance.

\subsubsection{Case 1} 
Letting the initial state and the initial guess be set to $\bm{x}[0]=\hat{\bm{x}}[0]=1e-28\mathds{1}_{15}$, and letting $\bm P_0=0.1I$, the confidence interval of the atomic time ${\rm TA}[k]$ 
 is shown in Fig.~\ref{TII_example1_confidence} to compare the proposed structured Kalman filter and the CKF algorithm with and without Greenhall's correction. 
 It follows from Theorem~\ref{prop:partial_kalman_obse} that the proposed method with $\hat{\bm P}_0=0.1(I_n\otimes\overline{V}\overline{V}^{\tr})$ and $\varGamma=0$ generates the same performance as the CKF algorithms if there is no computation error.
 which can be verified by the fact shown in Fig.~\ref{TII_example1_confidence}(a) that the black line representing the boundaries of the confidence interval of CKF algorithms in theory\footnote{The ideal confidence interval of CKF is generated by ${\rm TA}[k]=\frac{1}{m}C\bm{\epsilon}_{\rm \bar o}$ using \eqref{eq:weighted_error}.
The difference of CKF between real implementation (green) and ideal case (black) comes from the computation errors in numerical calculations which are diverging unboundedly along with the time.} coincides with the one of the proposed method with $\varGamma=0$.
It can be further seen from the figure that the proposed method can narrow the confidence interval from the CKF algorithms with  Greenhall's correction   \cite{greenhall2012review,godel2017kalman}, meaning that numerical stability is improved. 
Furthermore, the optimal $\varGamma$ is found as 0 in the optimization problem \eqref{eq:payoff}, which verifies Theorem~\ref{thm:d4}.

\subsubsection{Case 2}  
Let the initial state $\bm{x}[0]$ be set to a deterministic value around ${x}_{i}^j[0]\in (-6e-8,6e-8)$, whereas the initial predicted value is set to $\hat{\bm{x}}[0]=1e-28\mathds{1}_{15}$.
In this case, the expected value $\mu_0$ of initial prediction error does not satisfy the sufficient conditions in Theorem~\ref{thm:d4}.
The similar fact that the proposed method solves the numerical instability of CKF algorithms and the equivalence between the proposed method with $\hat{\bm P}_0=(I_n\otimes\overline{V})\bm P_0(I_n\otimes\overline{V})^{\tr}$ and the CKF algorithms can also be observed from the 98$\%$-confidence interval of ${\rm TA}[k]$  shown in Fig.~\ref{TII_example1_confidence}(b) with $\bm P_0=1e-2I$.
 Those observations also verify Theorem~\ref{prop:partial_kalman_obse}.
In terms of the optimal transformation matrix,
we use $\delta_1=1.0$, $\delta_2=5.4117$, $\hat{\bm P}_0=1e-4I$, and $T=1000$ to build the optimization problem \eqref{eq:payoff} and find the optimal transformation matrix $\varGamma$, which is found as a nonzero matrix. 
It can be seen from the yellow region which represents the 98$\%$-confidence interval of ${\rm TA}[k]$ under the optimal $\varGamma$ in Fig.~\ref{TII_example1_confidence}(b) is better maintained around zero than the case with $\varGamma=0$.



\begin{figure}
 \centering
\includegraphics[width=8.1cm]{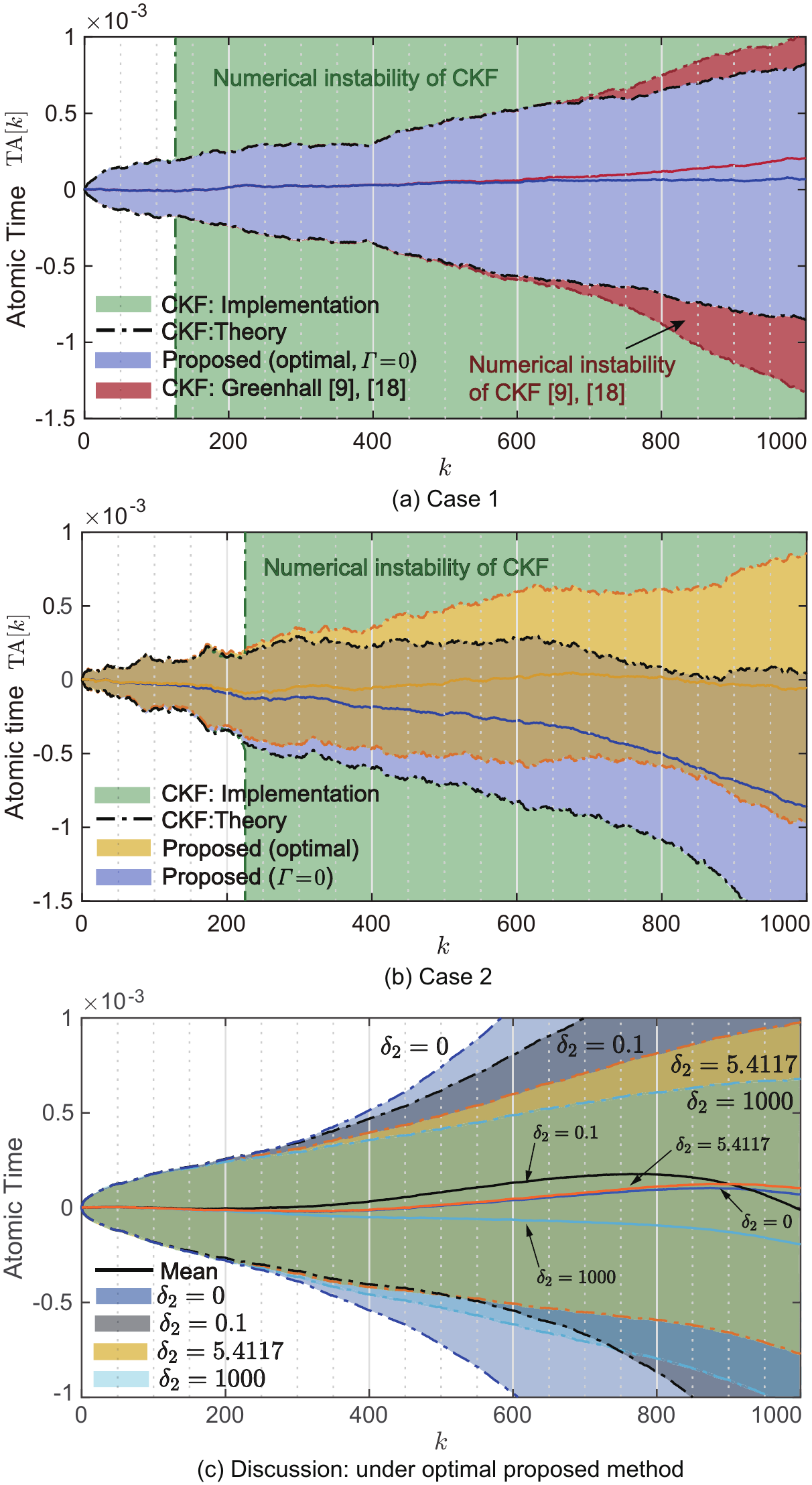} 
\caption{98$\%$-confidence interval of the \emph{atomic time} ${\rm TA}[k]$
under different algorithms. The black lines in (a) and (b) represent the ones under the CKF algorithms in ideal case. 
The smooth line in the middle of the confidence interval represents the expected value of ${\rm TA}[k]$ in 10 stochastic paths.
Both of (a) and (b) verify Theorem~\ref{prop:partial_kalman_obse},
whereas (a) further verifies Theorem~\ref{thm:d4}.
The atomic time can be better maintained around 0 under the proposed optimal method than the CKF algorithms without numerical instability issue.}
\label{TII_example1_confidence}
\end{figure}

\subsubsection{Discussion} 
The precise optimal solution of the optimization problem \eqref{eq:payoff} depends on the initial error covariance $\hat{\bm P}_0$ and the time $T$. 
However, ideally, for a given initial guess $\hat {\bm x}[0]$, the performance of both the CKF and our structured Kalman filter with a fixed $\varGamma$ does not significantly depend on the initial error covariance $\hat{\bm P}_0$ when $T$ is large enough.
This is because the atomic time ${\rm TA}[k+1]=\frac{1}{m}C\bm{\epsilon}_{\rm \bar o}[k+1]$ depends on the prediction error $\bm{\epsilon}_{\rm o}[k]$ and hence depends on the error covariance $\hat{\bm P}_{k-1}$ of the observable state.
Here, even though $\hat{\bm P}_{k}$ is ideally converging to the same steady-state value for any initial $\hat{\bm P}_{0}$, the value of $\hat{\bm P}_{k}$ may be diverging in the CKF algorithms for practical implementation due to the numerical instability and hence the atomic time  ${\rm TA}[k]$ may be significantly different from each other for different initial $\hat{\bm P}_{0}$ in the application.
In fact, this phenomenon does at appear in our proposed method.
Hence, the proposed optimal structured Kalman filter should show better robustness in terms of $\hat{\bm P}_0$ than the CKF algorithms.
To verify the robustness of the proposed method concerning the initial error covariance $\hat{\bm P}_0$, we use $\hat{\bm P}_0=0.01I,0.02I,0.04I$ in real implementation in the simulation. 
It can be seen from the trajectories of the atomic time ${\rm TA}[k]$ shown in Fig.~\ref{TII_example1_different_initial} that CKF algorithm with Greenhall's covariance correction method \cite{greenhall2012review,godel2017kalman} under $\bm P_0=0.01I,0.02I,0.04I$ shows worse robustness in Fig.~\ref{TII_example1_different_initial} since its generated atomic time ${\rm TA}[k]$ (solid lines in blue, green, and red) significantly depend on the initial $\bm P_0$. 
This fact can be also observed from the aspect of frequency stability of the predicted ensemble time deviations using the overlapping Allan deviations shown in Fig.~\ref{TII_Allan_1}.
It can be seen from the figure that the frequency stability of CKF with Greenhall's correction (orange lines) changes a lot when ${\bm P}_0=0.01I,0.02I,0.04I$ but it remains the same under our proposed structured Kalman filter (blue lines). 

The impact of the value of $\delta_2$ in the atomic time under the optimal proposed method with $\delta_1=1$ is illustrated in Fig.~\ref{TII_example1_confidence}(c). 
It can be seen from the figure that if $\delta_2$ is too small (e.g., $\delta_2=0$), the mean of the atomic time is very close to zero while the variance is large.
Slightly increasing $\delta_2$ can narrow the confidence interval without a big change in the mean value (see the case with $\delta_2=5.4117$).
However, if the $\delta_2$ is large enough, then increasing $\delta_2$ can make the mean of the atomic time worse (see the blue solid line representing the mean of the atomic time with $\delta_2=1000$).

\begin{figure}
 \centering
\includegraphics[width=7.6cm]{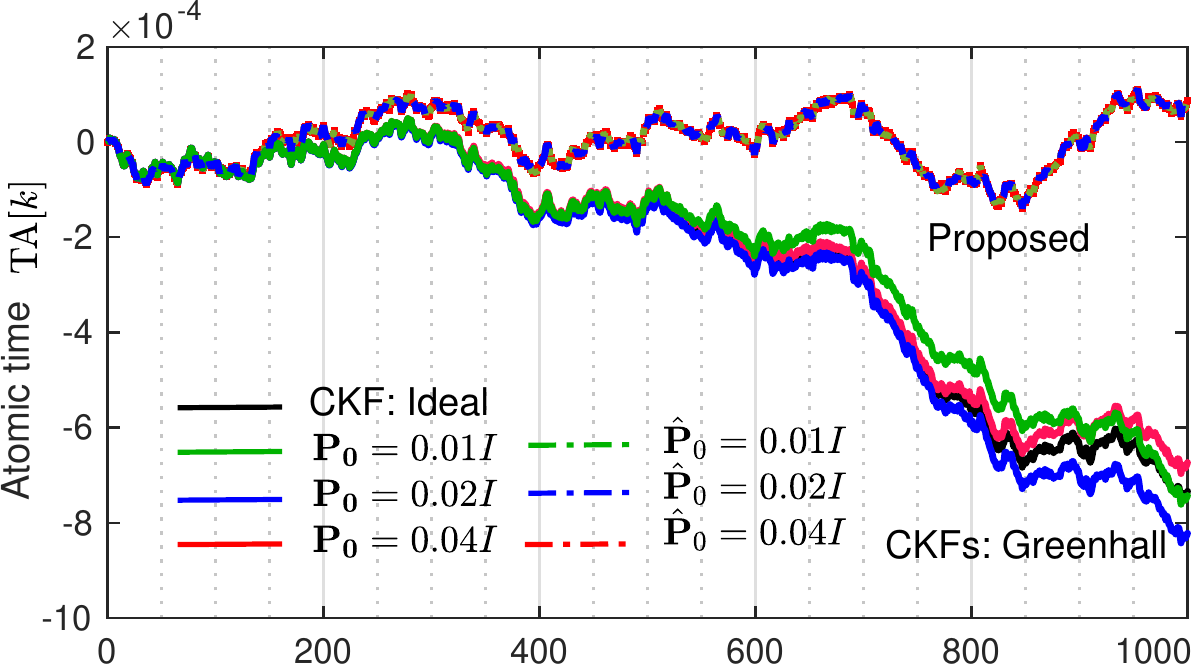} 
\caption{The \emph{atomic time} ${\rm TA}[k]$ under the proposed structured Kalman filter and CKF with Greenhall's correction correction method by Greenhall \cite{greenhall2012review,godel2017kalman}.
The black line represents the theoretical expression of CKF.}\label{TII_example1_different_initial}
\end{figure}

\begin{figure}
\centering
\includegraphics[width=8cm]{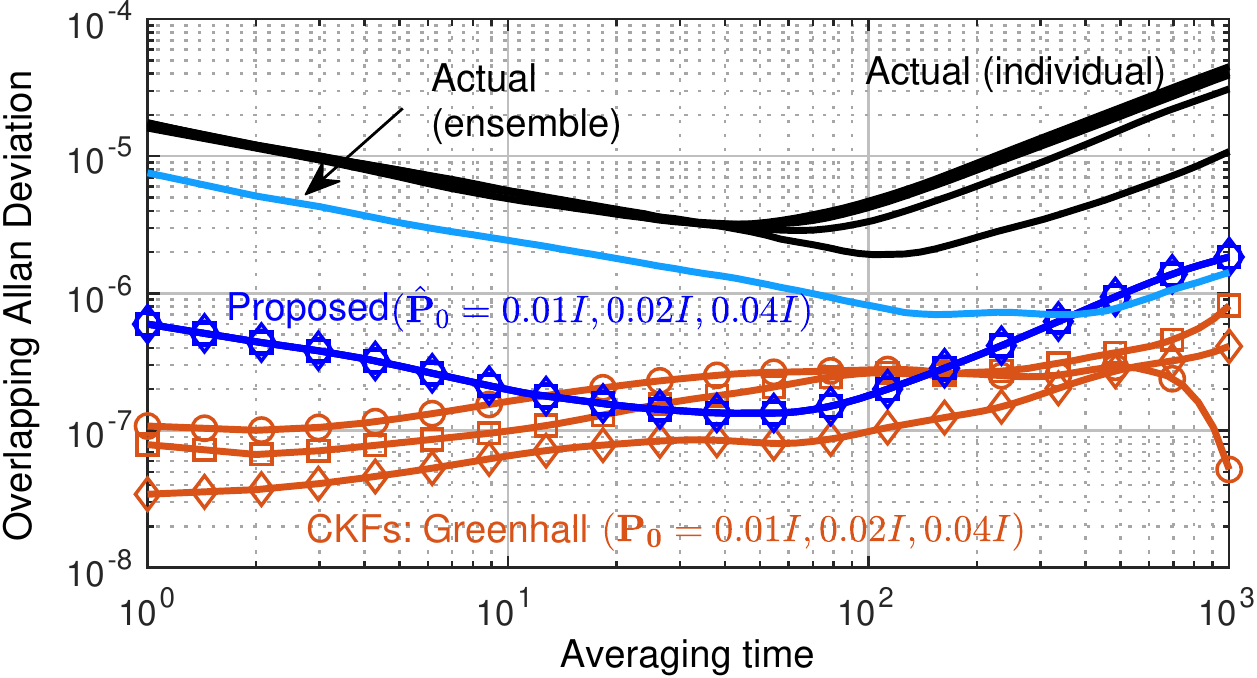} 
\caption{Overlapping Allan deviations of the predicted ensemble time deviation under the optimal PKF and CKF with Greenhall's covariance correction method \cite{greenhall2012review,godel2017kalman} (rhombus, square, and circle markers: $\bm P_0=0.01I,0,02I,0,04I$). 
The proposed optimal structured Kalman filter show better robustness in terms of $\hat{\bm P}_0$ than Greenhall's CKF algorithms.}\label{TII_Allan_1}
\end{figure}

\section{Conclusion} \label{sec:conc}
In this paper, we addressed a prediction problem of atomic clock ensembles for generating an accurate time scale and proposed a structured Kalman filter associated with the transformation matrix for observable Kalman canonical decomposition from CKF.
We presented the conditions where the proposed structured Kalman filter is reduced to the same as the CKF in the unobservable or observable state space. 
Due to undetectability of the clock model, we considered an objective function associated with not only the expected value of atomic time (prediction error in ensemble time deviation) but also its variance to find out the optimal transformation matrix in observable Kalman canonical decomposition.
We revealed that such an objective function is a convex function and showed some conditions under which CKF is nothing but the optimal algorithm minimizing the objective function.
A numerical example was presented in the paper to show the robustness of the proposed optimal structured Kalman filter in terms of the initial error covariance. 
It was found that the proposed method may further narrow the confidence interval from CKF as long as the transformation matrix can be properly picked.
\bibliographystyle{IEEEtran} 
 
\bibliography{Kalman}

\end{document}